
\documentclass[12pt]{amsart}
\usepackage{amssymb,amscd}
\usepackage{verbatim}

\usepackage{xcolor}

\usepackage{amsmath,amssymb,graphicx,mathrsfs}   
\usepackage{enumerate}
\usepackage[colorlinks=true,allcolors = blue]{hyperref} 

\usepackage{tikz}
\usetikzlibrary{matrix}

\usepackage[all]{xy}


\usepackage{amssymb,amsfonts,amsthm,amsmath,calligra}
\usepackage{slashed}
\usepackage{yfonts}
\usepackage{mathrsfs,pifont}
\usepackage{float}

\usepackage[all]{xy}

\usepackage{tikz}

\usepackage{graphicx}
\usepackage{xcolor}

\usepackage{amssymb,amsfonts,amsthm,amsmath}
\usepackage[all]{xy}
\usepackage{slashed}
\usepackage{yfonts}
\usepackage{mathrsfs,pifont}


\textwidth 6.5truein
\textheight 8.67truein
\oddsidemargin 0truein
\evensidemargin 0truein
\topmargin 0truein

\let\frak\mathfrak

\def\>{\relax\ifmmode\mskip.666667\thinmuskip\relax\else\kern.111111em\fi}
\def\<{\relax\ifmmode\mskip-.333333\thinmuskip\relax\else\kern-.0555556em\fi}
\def\vsk#1>{\vskip#1\baselineskip}
\def\vv#1>{\vadjust{\vsk#1>}\ignorespaces}
\def\vvn#1>{\vadjust{\nobreak\vsk#1>\nobreak}\ignorespaces}

  \let\ssize\scriptstyle
\let\sssize\scriptscriptstyle

\let\Medskip\medskip
\def\medskip{\par\Medskip}
\let\Bigskip\bigskip
\def\bigskip{\par\Bigskip}

\let\Maketitle\maketitle
\def\maketitle{\Maketitle\thispagestyle{empty}\let\maketitle\empty}

\newtheorem{thm}{Theorem}[section]

\newtheorem{prop}[thm]{Proposition}

\newtheorem{defn}[thm]{Definition}

\theoremstyle{definition}                                  
\numberwithin{equation}{section}

\theoremstyle{definition}
\newtheorem*{rem}{Remark}
\newtheorem*{example}{Example}

\let\mc\mathcal
\let\nc\newcommand

\let\la\lambda

\let\phi\varphi

\let\om\omega

\let\der\partial

\let\geq\geqslant

\let\leq\leqslant

\let\on\operatorname
\let\bi\bibitem
\let\bs\boldsymbol

\def\C{{\mathbb C}}
\def\Z{{\mathbb Z}}

\def\F{{\mathbb F}}

\def\+#1{^{\{#1\}}}

\def\Gr{\on{Gr}}

\def\beq{\begin{equation}}
\def\eeq{\end{equation}}
\def\be{\begin{equation*}}
\def\ee{\end{equation*}}

\nc{\bea}{\begin{eqnarray*}}
\nc{\eea}{\end{eqnarray*}}
\nc{\bean}{\begin{eqnarray}}
\nc{\eean}{\end{eqnarray}}

\let\ga\gamma

\nc{\Il}{{\mc I_{\bs\la}}}
\nc{\bla}{{\bs\la}}
\nc{\Fla}{\F_\bla}
\nc{\tfl}{{T^*\Fla}}
\nc{\GL}{{GL_n(\C)}}
\nc{\GLC}{{GL_n(\C)\times\C^*}}

\let\sd s 

\def\ddk_#1{\kk_{#1}\<\>\frac\der{\der\<\>\kk_{#1}}}

\def\bul{\mathbin{\raise.2ex\hbox{$\sssize\bullet$}}}
\def\intt{\mathchoice
{\mathop{\raise.2ex\rlap{$\,\,\ssize\backslash$}{\intop}}\nolimits}
{\mathop{\raise.3ex\rlap{$\,\sssize\backslash$}{\intop}}\nolimits}
{\mathop{\raise.1ex\rlap{$\sssize\>\backslash$}{\intop}}\nolimits}
{\mathop{\rlap{$\sssize\<\>\backslash$}{\intop}}\nolimits}}

\let\kk q 
\let\cc c

\let\Ko K

\def\GZ/{Gelfand-Zetlin}
\def\KZ/{{\slshape KZ\/}}
\def\qKZ/{{\slshape qKZ\/}}
\def\XXX/{{\slshape XXX\/}}

\nc{\A}{{\mc A}}

\nc{\hsl}{\widehat{{\frak{sl}_2}}}

\nc{\BC}{{ \mathbb C}}
\nc{\lra}{\longrightarrow}
\nc{\CO}{{\mathcal{O}}}
\nc{\BZ}{{ \mathbb Z}}
\nc{\hfn}{\hat{\frak{n}}}
\nc\Zs{{\Z/p^s\Z}}
\nc\Zo{{\Zs[z]^0}}
\nc\gr{{\on{gr}}}

\nc\fD{{\frak D}}

\newcommand{\Ver}{\mathsf{V}}

\newcommand{\matC}{\mathbb{C}}

\newcommand{\matQ}{\mathbb{Q}}

\newcommand{\qm}{\mathsf{QM}}

\newcommand{\bv}{\mathsf{v}}

\usepackage{tikz}

\usetikzlibrary{decorations}
\usetikzlibrary{decorations.pathmorphing}
\usetikzlibrary{calc}


\newsavebox{\Ipm}
\savebox{\Ipm}{%
\begin{tikzpicture}[baseline= {($(current bounding box.base)-(0pt,-20pt)$)}]
\begin{scope}
\draw[dashed,line width=0.35mm] (0,0)-- (10,0) ;
\node[circle,draw,minimum size=4mm,fill=black] (c) at (0,0){};
\node[circle,draw,minimum size=4mm,fill=black] (c) at (2,0){};
\node[circle,draw,minimum size=4mm,fill=black] (c) at (4,0){};
\node[circle,draw,minimum size=4mm,fill=black] (c) at (6,0){};
\node[circle,draw,minimum size=4mm,fill=black] (c) at (8,0){};
\node[circle,draw,minimum size=4mm,fill=black] (c) at (10,0){};
\draw [-to,line width=0.5mm](0,0) -- (1.8,0);
\draw [-to,line width=0.5mm](8,0) -- (9.8,0);
\node[rectangle,draw,minimum size=4mm,fill=black] (c) at (4,-2){};
\draw [-to,line width=0.5mm](4,-2) -- (4,-0.2);
\node[rectangle,draw,minimum size=4mm,fill=black] (c) at (6,-2){};
\draw [-to,line width=0.5mm](6,-2) -- (6,-0.2);
\node at (0,0.5) { $\bv_1$};
\node at (2,0.5) { $\bv_2$};
\node at (4,0.5) { $\bv_k$};
\node at (6,0.5) { $\bv_{n-k}$};
\node at (8,0.5) { $\bv_{n-2}$};
\node at (10,0.5) { $\bv_{n-1}$};
\node at (4,-2.6) { $z_1$};
\node at (6,-2.6) { $z_2$};
\end{scope}
\end{tikzpicture}}

\newsavebox{\xbox}
\savebox{\xbox}{%
\begin{tikzpicture}[baseline= {($(current bounding box.base)-(0pt,-30pt)$)}]
\begin{scope}
\draw[line width=0.35mm] (0,0)-- (5,5) ;

\draw[line width=0.35mm] (-1,1)-- (4,6) ;

\draw[line width=0.35mm] (-2,2)-- (3,7) ;

\draw[line width=0.35mm] (-3,3)-- (2,8) ;

\draw[line width=0.35mm] (0,0)-- (-3,3) ;
\draw[line width=0.35mm] (1,1)-- (-2,4) ;
\draw[line width=0.35mm] (2,2)-- (-1,5) ;
\draw[line width=0.35mm] (3,3)-- (0,6) ;
\draw[line width=0.35mm] (4,4)-- (1,7) ;
\draw[line width=0.35mm] (5,5)-- (2,8) ;

\node at (-2,3) { $x_{1,1}$};

\node at (-1,2) { $x_{2,1}$};
\node at (-1,4) { $x_{2,2}$};

\node at (0,1) { $x_{3,1}$};
\node at (0,3) { $x_{3,2}$};
\node at (0,5) { $x_{3,3}$};

\node at (1,2) { $x_{4,1}$};
\node at (1,4) { $x_{4,2}$};
\node at (1,6) { $x_{4,3}$};

\node at (2,3) { $x_{5,1}$};
\node at (2,5) { $x_{5,2}$};
\node at (2,7) { $x_{5,3}$};

\node at (3,4) { $x_{6,1}$};
\node at (3,6) { $x_{6,2}$};

\node at (4,5) { $x_{7,1}$};
\end{scope}
\end{tikzpicture}}

\newsavebox{\gbox}
\savebox{\gbox}{%
\begin{tikzpicture}[baseline= {($(current bounding box.base)-(0pt,-30pt)$)}]
\begin{scope}
\draw[line width=0.35mm] (0,0)-- (5,5) ;

\draw[line width=0.35mm] (-1,1)-- (4,6) ;

\draw[line width=0.35mm] (-2,2)-- (3,7) ;

\draw[line width=0.35mm] (-3,3)-- (2,8) ;

\draw[line width=0.35mm] (0,0)-- (-3,3) ;
\draw[line width=0.35mm] (1,1)-- (-2,4) ;
\draw[line width=0.35mm] (2,2)-- (-1,5) ;
\draw[line width=0.35mm] (3,3)-- (0,6) ;
\draw[line width=0.35mm] (4,4)-- (1,7) ;
\draw[line width=0.35mm] (5,5)-- (2,8) ;

\draw [-to,line width=0.7mm](1,2) -- (0+0.05,1+0.05);
\draw [-to,line width=0.7mm](2,3) -- (1+0.05,2+0.05);
\draw [-to,line width=0.7mm](3,4) -- (2+0.05,3+0.05);
\draw [-to,line width=0.7mm](4,5) -- (3+0.05,4+0.05);

\draw [-to,line width=0.7mm](0,3) -- (-1+0.05,2+0.05);
\draw [-to,line width=0.7mm](1,4) -- (0+0.05,3+0.05);
\draw [-to,line width=0.7mm](2,5) -- (1+0.05,4+0.05);
\draw [-to,line width=0.7mm](3,6) -- (2+0.05,5+0.05);

\draw [-to,line width=0.7mm](-1,4) -- (-2+0.05,3+0.05);
\draw [-to,line width=0.7mm](0,5) -- (-1+0.05,4+0.05) ;
\draw [-to,line width=0.7mm](1,6) -- (0+0.05,5+0.05);
\draw [-to,line width=0.7mm](2,7) -- (1+0.05,6+0.05);

\draw [-to,line width=0.7mm](-1,2) -- (0-0.05,1+0.05);
\draw [-to,line width=0.7mm](-2,3) --(-1-0.05,2+0.05) ;

\draw [-to,line width=0.7mm](0,3) -- (1-0.05,2+0.05);
\draw [-to,line width=0.7mm](-1,4) -- (0-0.05,3+0.05);

\draw [-to,line width=0.7mm](1,4) -- (2-0.05,3+0.05);
\draw [-to,line width=0.7mm](0,5) -- (1-0.05,4+0.05);

\draw [-to,line width=0.7mm](2,5) -- (3-0.05,4+0.05);
\draw [-to,line width=0.7mm](1,6) -- (2-0.05,5+0.05);

\draw [-to,line width=0.7mm](3,6) -- (4-0.05,5+0.05);
\draw [-to,line width=0.7mm](2,7) -- (3-0.05,6+0.05);

\draw [-to,line width=0.7mm](0,1) -- (0,-0.5);

\draw [-to,line width=0.7mm](2,8.5) -- (2,7);

\node at (0,-1) { $z_{1}$};

\node at (2,9) { ${z_{2}}$};

\end{scope}
\end{tikzpicture}}


\newsavebox{\hweights}
\savebox{\hweights}{%
\begin{tikzpicture}[baseline= {($(current bounding box.base)-(0pt,-30pt)$)}]
\begin{scope}
\draw[line width=0.35mm] (0,0)-- (5,5) ;

\draw[line width=0.35mm] (-1,1)-- (4,6) ;

\draw[line width=0.35mm] (-2,2)-- (3,7) ;

\draw[line width=0.35mm] (-3,3)-- (2,8) ;

\draw[line width=0.35mm] (0,0)-- (-3,3) ;
\draw[line width=0.35mm] (1,1)-- (-2,4) ;
\draw[line width=0.35mm] (2,2)-- (-1,5) ;
\draw[line width=0.35mm] (3,3)-- (0,6) ;
\draw[line width=0.35mm] (4,4)-- (1,7) ;
\draw[line width=0.35mm] (5,5)-- (2,8) ;
\node at (0,1) { $1$};
\node at (-1,2) { $2$}; \node at (1,2) { $2$};
\node at (-2,3) { $3$};\node at (0,3) { $3$};
\node at (2,3) { $3$};
\node at (-1,4) { $4$};
\node at (1,4) { $4$};
\node at (3,4) { $4$};
\node at (0,5) { $5$};
\node at (2,5) { $5$};
\node at (4,5) { $5$};
\node at (1,6) { $6$};
\node at (3,6) { $6$};
\node at (2,7) { $7$};
\end{scope}
\end{tikzpicture}}

\begin{document}

\hrule width0pt
\vsk->

\title[Polynomial superpotential for  $\Gr(k,n)$ from a limit of vertex functions]
{Polynomial superpotential for 
Grassmannian $\Gr(k,n)$ from a limit of vertex function}

\author
[Andrey Smirnov and Alexander Varchenko]
{Andrey Smirnov$^{\star}$ and Alexander Varchenko$^{\diamond}$}

\maketitle

\begin{center}
{ Department of Mathematics, University
of North Carolina at Chapel Hill\\ Chapel Hill, NC 27599-3250, USA\/}

\end{center}

\vsk>
{\leftskip3pc \rightskip\leftskip \parindent0pt \Small
{\it Key words\/}:  Superpotentials; Vertex Functions; $J$-functions; Landau-Ginzburg model.

\vsk.6>
{\it 2020 Mathematics Subject Classification\/}: 14G33 (11D79, 32G34, 33C05, 33E30)
\par}


{\let\thefootnote\relax
\footnotetext{\vsk-.8>\noindent
$^\star\<${\sl E\>-mail}:\enspace asmirnov@email.unc.edu
\\
$^\diamond\<${\sl E\>-mail}:\enspace  anv@email.unc.edu}}

\begin{abstract}

In this note we discuss an integral representation for the vertex function 
of the cotangent bundle over the Grassmannian, 
 $X=T^{*}\Gr(k,n)$. This integral representation can be used to compute
 the $\hbar\to \infty$ limit of the vertex function, where $\hbar$ denotes the equivariant parameter of a torus acting on $X$ by dilating the cotangent fibers. 
 We show that in this limit the integral turns into the standard mirror integral representation for the $A$-series 
 of the Grassmannian $\Gr(k,n)$ with the Laurent polynomial Landau-Ginzburg superpotential of Eguchi, Hori and Xiong.
We also observe some Dwork type congruences for the coefficients of the $A$-series.

\end{abstract}


\setcounter{footnote}{0}
\renewcommand{\thefootnote}{\arabic{footnote}}

\section{Introduction}

\subsection{}
The vertex functions have been introduced in \cite{Oko17} as generating 
functions counting rational quasimaps to Nakajima varieties.  In this respect, a vertex function 
is a ``quasimap'' analog of Givental's $J$-function in quantum cohomology. 
In this paper we consider the the cohomological vertex function for the cotangent
 bundle over the Grassmanian $X=T^{*}\Gr(k,n)$. By definition, this function 
 is a power series in the quantum parameter $z$ with coefficients in the equivariant cohomology:
$$
{\textsf{Vertex}}(z) \in H^{\bullet}_{T}(X)[[z]]
$$
where $T$ is a torus acting on $X$, see Section \ref{vertexfun}. Let $\Ver(z)$ denote the coefficient of the fundamental class in the vertex function
$$
\Ver(z):=\Big\langle {\textsf{Vertex}}(z), [X] \Big\rangle
$$
where $\langle - , - \rangle$ stands for the standard pairing
 in the equivariant cohomology. The power series $\Ver(z)$ is the analog of the so-called $A$-series in quantum cohomology. 
The coefficients of $\Ver(z)$ depend non-trivially on the equivariant parameter
 $\hbar$, which corresponds to the torus acting on $X$ by dilating the cotangent fibers. In this note we describe the following result (Theorem \ref{mainthm}):
\begin{thm} \label{themintro}
In the non-equivariant specialization one has the following limit:
\bean \label{mresult}
\lim\limits_{\hbar \to \infty}\, \Ver(z/\hbar^{n}) =   \dfrac{1}{(2 \pi \sqrt{-1})^{k(n-k)}} \oint e^{\frac{1}{\epsilon} S({x},{z}) }\,  \bigwedge\limits_{i,j} \frac{dx_{i,j}}{x_{i,j}}
\eean
where $S({x},{z})$ denotes a Laurent polynomial in $k(n-k)$ variables $x=(x_{i,j})$ given by (\ref{polsuper}).
\end{thm}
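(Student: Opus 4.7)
The plan is to establish the limit in three stages: first, to exhibit a concrete integral representation for $\Ver(z)$ whose dependence on $\hbar$ is explicit; second, to perform a uniform Stirling analysis of the integrand after the substitution $z\mapsto z/\hbar^{n}$; and third, to identify the resulting exponential kernel with the Eguchi--Hori--Xiong superpotential $S(x,z)$ from (\ref{polsuper}). For the first stage I would use the integral representation of Section~\ref{vertexfun}, namely a contour integral over a triangular array of variables $x=(x_{i,j})$ with measure $\bigwedge_{i,j} dx_{i,j}/x_{i,j}$ and kernel given by a product of ratios of $\Gamma$-functions (the Nakajima/quasimap integrand for $X=T^{*}\Gr(k,n)$). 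In this representation the $\hbar$-dependence is carried by linear shifts of the $\Gamma$-arguments by $\hbar$, one shift per edge of the Gelfand--Tsetlin type graph, while $z$ appears only in a monomial attached to the boundary vertices where the framing nodes sit.

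The second stage is the heart of the proof. After the rescaling $z\mapsto z/\hbar^{n}$, I would apply Stirling's asymptotic
$$
\log\Gamma(\hbar\, a + b) \;=\; \hbar\, a\log(\hbar\, a) \;-\; \hbar\, a \;+\; O(\log\hbar)
$$
to each $\Gamma$-factor in the kernel. The leading $\hbar\log\hbar$-pieces must cancel across the full product, and they do so precisely because the arguments at each internal vertex of the Gelfand--Tsetlin graph balance (outgoing minus incoming sums to zero, which is the algebraic content of the quasimap equation at that vertex). After this cancellation what remains is $\hbar^{-1}\cdot P(x,z)$ together with subleading terms of order $O((\log\hbar)/\hbar)$ that vanish in the limit, where $P$ is a Laurent polynomial in the $x_{i,j}$ and $z$. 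The exponent $n$ in the rescaling $z/\hbar^{n}$ is fixed by simple bookkeeping: it equals the total $\hbar$-weight of the $z$-carrying monomial, which in turn is dictated by the two boundary vertices of the triangle where the framing data $z_{1},z_{2}$ enter.

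The third stage is to verify $P(x,z)=S(x,z)$ by a monomial-by-monomial comparison with (\ref{polsuper}), using the combinatorics of the Gelfand--Tsetlin triangle: each edge of the graph should contribute exactly one Laurent monomial in the ratio of adjacent $x_{i,j}$'s, and the two boundary edges attached to $z_{1}$ and $z_{2}$ provide the $z$-dependent monomials. With the identification $\epsilon=1/\hbar$, the limit then coincides with the oscillatory integral on the right-hand side of (\ref{mresult}). I expect the main obstacle to be not the formal Stirling computation (which is essentially dictated by the combinatorics just described) but the analytic and topological step of controlling the contour of integration: one must choose the contour in the rescaled variables so that the poles picked up correspond precisely to the $z$-expansion of $\Ver(z)$, and one must ensure that Stirling's estimate applies uniformly along it. This is a standard but delicate steepest-descent argument, and it is here that the nondegeneracy of the critical points of the EHX superpotential $S(x,z)$ is used in an essential way.
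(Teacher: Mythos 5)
Your high-level plan (take an integral representation with explicit $\hbar$-dependence, rescale, pass to the limit under the integral, and match the surviving kernel with $S(x,z)$) is the right shape, but as written the argument has genuine gaps. First, the representation you start from is not the one that is actually available: the paper's integral representation (Theorem \ref{verttheorem}, from \cite{SmV23}) is an Euler-type integral over the torus $\gamma_{k,n}$ of \eqref{conctour}, with kernel a product of factors $(1-x_{i,a}/x_{j,b})^{-\om}$, $\om=\hbar/\epsilon$ --- not a ratio of $\Gamma$-functions --- so no Stirling analysis is needed or applicable; the relevant elementary limit is $(1-w/\hbar)^{-\hbar/\epsilon}\to e^{w/\epsilon}$. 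If you insist on a Mellin--Barnes/$\Gamma$-function representation you would first have to establish it, and your claimed cancellation of the $\hbar\log\hbar$ terms ``because the arguments at each internal vertex balance'' is asserted, not proved. Second, and more importantly, the key quantitative input is missing: it is not enough to rescale $z\mapsto z/\hbar^{n}$; one must rescale \emph{each} integration variable by $x_{i,j}=y_{i,j}\hbar^{m_{i,j}}$, where $m_{i,j}$ is the weight function \eqref{weightfundef}, after first deforming $\gamma_{k,n}$ to the homologous contour $|x_{i,j}|=m_{i,j}\varepsilon|\hbar|^{m_{i,j}}$ (Definition \ref{defequiv}). It is precisely this choice that makes a factor survive in the limit if and only if $|m_{i,a}-m_{j,b}|=1$, i.e.\ if and only if the pair is an edge of $\Gamma$, while the factors $\Delta$ and all non-adjacent $L$-factors tend to $1$ because the relevant power of $\hbar$ is at least $2$. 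Your ``simple bookkeeping'' fixes only the exponent $n$ on $z$ and does not produce this per-variable scaling, without which the monomial-by-monomial identification $P=S$ in your third stage cannot be carried out.

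Two further points are off the mark. The identification $\epsilon=1/\hbar$ is incorrect: $\epsilon$ is an independent equivariant parameter (the weight on $T_0\,\mathbb{P}^1$) and stays fixed as $\hbar\to\infty$; the exponential $e^{S/\epsilon}$ comes from $(1-w/\hbar)^{-\hbar/\epsilon}$, not from any identification of the two parameters. Finally, the closing appeal to steepest descent and to nondegeneracy of the critical points of $S$ is a red herring: the right-hand side of \eqref{mresult} is \emph{defined} as a constant-term (residue at the origin) series \eqref{expinteg}, the contour is a fixed compact torus, the pointwise limit of the integrand is uniform there, and the limit simply commutes with the integration. No saddle-point analysis, and no property of the critical locus of $S$, enters the proof.
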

 The integral in the right-hand
  side of (\ref{mresult}) denotes the constant term of the integrand, 
  see Section \ref{expintdef} for the definition. 
The Laurent polynomial $S({x},{z})$ appearing in the limit above is the
 well-known version of the
 {\it superpotential} for $\Gr(k,n)$. It first appeared in the paper \cite{EHX} by 
 Eguchi-Hori-Xiong  
 and has since been revisited and generalized by many authors in the extensive literature on 
 quantum cohomology of Grassmannians. A partial listing of references can be found in 
 \cite{BCKS98,Cas,GGI,KoSt,LaTe,MaRi}. 
  Therefore, the expression on the right-hand side of equation \eqref{mresult}
   corresponds to the widely used integral representation for the $A$-series of the
   Grassmannian $\Gr(k,n)$.
  A closed combinatorial formula for 
 the coefficients of this series is also known, it was first conjectured in \cite{BCKS98} and later proved in  \cite{MaRi}, see Corollary 4.8 in \cite{MaRi} (our $z$ is 
 their $q$).

Informally speaking formula (\ref{mresult}) means that in the limit $\hbar\to\infty$ 
the cotangent directions of $X$ do not contribute to the quasimap partition function
 and the vertex function of $T^{*}\Gr(k,n)$ degenerates to the $J$-function of $\Gr(k,n)$. 
 The idea that $\hbar\to\infty$ bridges the vertex functions with the $J$-functions is not new, 
see \cite{TV22} or  Section 5 of \cite{KPSZ}.  However, the derivation of the
Laurent superpotential of \cite{EHX} from the limit of $\Ver(z)$ has not been 
documented well. The goal of this letter is to fill up this gap in the existing literature. 
 The main technical tool which allows us to compute the limit is the integral 
 representation for $\Ver(z)$ obtained previously by the authors in
Theorem 3.2 of \cite{SmV23}.

\subsection{}
As an illustration, let us consider the statement of Theorem \ref{themintro} in the simplest case. 

\medskip
\noindent
{\bf Example.}
Let $X=T^{*} \mathbb{P}(\mathbb{C}^2)$. There is a two-dimensional torus 
$\mathsf{A}=(\mathbb{C}^{\times})^{2}$ 
with equivariant parameters $u_1, u_2$
naturally acting on $\mathbb{C}^2$  by dilating the coordinate subspaces. 
There is also a one-dimensional torus $\mathbb{C}^{\times}_{\hbar}$
 acting on $X$ by dilation of the cotangent fibers, we denote the
  corresponding character by $\hbar$. Finally, there is a one-dimensional torus $\mathbb{C}^{\times}_{\epsilon}$
  with equivariant parameter $\epsilon$ 
 which acts on the source of the quasimaps  $\mathbb{P}^1 \dashrightarrow X$. 
  By definition, the vertex function is a power series
   with coefficients in equivariant cohomology \cite{Oko17}:
$$
\textsf{Vertex}(z) \in H^{\bullet}_{\mathsf{A}\times \mathbb{C}^{\times}_{\hbar} \times \mathbb{C}^{\times}_{\epsilon} }(X)_{loc} [[z]]
$$
where the subscript $loc$ denotes localization with respect to $\mathbb{C}^{\times}_{\epsilon}$. In the basis of \\
 $H^{\bullet}_{\mathsf{A}\times \mathbb{C}^{\times}_{\hbar} \times \mathbb{C}^{\times}_{\epsilon} }(X)_{loc}$ given by the classes of the torus fixed points $[p_1],[p_2] \in X^{\mathsf{A}}$
  (which correspond to the coordinate lines in $\mathbb{C}^2$)
   we have the following closed formulas in terms of the Gauss hypergeometric functions:
\bean \label{hyperfunc}
\begin{array}{c}
\Big\langle \textsf{Vertex}(z), [p_1] \Big\rangle = {}_2 F_{1}\Big(\dfrac{\hbar}{\epsilon}, \dfrac{u_2-u_1+\hbar}{\epsilon}; \dfrac{u_2-u_1+\epsilon}{\epsilon}; z \Big), \\ 
\\
\Big\langle \textsf{Vertex}(z), [p_2] \Big\rangle = {}_2 F_{1}\Big(\dfrac{\hbar}{\epsilon}, \dfrac{u_1-u_2+\hbar}{\epsilon}; \dfrac{u_1-u_2+\epsilon}{\epsilon}; z \Big).
\end{array}
\eean
In the  {\it non-equivariant} limit $u_1=u_2=0$, corresponding  to ``turning off'' 
the action of the torus $\mathsf{A}$, the above functions coincide and give 
the coefficient of the vertex function at the fundamental class~$[X]$:
\bean \label{limhyperg}
\Big\langle \textsf{Vertex}(z), [X] \Big\rangle ={}_2 F_{1}\Big(\dfrac{\hbar}{\epsilon}, \dfrac{\hbar}{\epsilon}; 1; z \Big)
\eean
We denote this coefficient by $\Ver(z)$. Explicitly we have
$$
\Ver(z)=\sum\limits_{d=0}^{\infty}\,  \dfrac{(\hbar)_{d}^2}{(d!)^2 \epsilon^{2 d}} \, z^d, \ \ \ \mathrm{where} \ \ \  (\hbar)_d=\hbar (\hbar+\epsilon)(\hbar+2\epsilon)\dots (\hbar+(d-1) \epsilon).
$$
Since
$
\lim\limits_{\hbar\to \infty}\, {(\hbar)_{d}}/{\hbar^d} =1$,\  we obtain:
$$
\lim\limits_{\hbar\to \infty}\, \Ver(z/\hbar^{2})=\sum\limits_{d=0}^{\infty}\,  \dfrac{z^d }{(d!)^2 \epsilon^{2d}}.
$$
Let $S(x,z)=x+z/x$, then
$$
\oint \dfrac{dx}{x} \,  S(x,z)^{d}:= [S(x,z)^{d}]_{0}= 
\left\{\begin{array}{ll} \dfrac{(d)!}{(d/2)! (d/2)!} z^d, & d \ \ \textrm{is even} \\ \\
0, & d \ \ \textrm{is odd}
\end{array}\right.
$$
where $[S(x,z)^{d}]_{0}$ denotes the constant term in $x$ of
 the Laurent polynomial $S(x,z)^{d}$. Combining all these together,  we can write
$$
\lim\limits_{\hbar\to \infty}\, \Ver(z/\hbar^{2})= 
\sum_{d=0}^{\infty}\, \frac{1}{{d!}} \oint \dfrac{dx}{x}  {S(x,z)^d} = \oint \dfrac{dx}{x} \, e^{\frac{S(x,z)}{\epsilon}}.
$$
This formula  is the statement of Theorem \ref{mresult} in this case. 
A more straightforward way to compute this limit is to note that the hypergeometric function
 (\ref{limhyperg}) has the integral representation:
\bean \label{inrepex}
\Ver(z)=\oint\limits_{|x|=\varepsilon}\, \dfrac{dx}{x} \, \Big(1-x\Big)^{-\frac{\hbar}{\epsilon}} \Big(1-\frac{z}{x}\Big)^{-\frac{\hbar}{\epsilon}}
\eean
where $\varepsilon$ is any positive real number such that $|z|< \varepsilon < 1$. We note that 
the change of variables $z\to z/\hbar^{2}$, $x\to x/\hbar$ together with
change of the contour $\varepsilon \to \varepsilon/\hbar$ does not affect 
this condition for large $|\hbar|$. Thus, for large $|\hbar|$ we have:
$$
\Ver(z/\hbar^{2})=\oint\limits_{|x|=\varepsilon}\, \dfrac{dx}{x} \, \Big(1-\frac{x}{\hbar}\Big)^{-\frac{\hbar}{\epsilon}} \Big(1-\frac{z}{x \hbar}\Big)^{-\frac{\hbar}{\epsilon}}
$$
which allows one to compute the limit using elementary tools:
$$
\lim\limits_{\hbar \to \infty}\, \Big(1-\frac{x}{\hbar}\Big)^{-\frac{\hbar}{\epsilon}} \Big(1-\frac{z}{x \hbar}\Big)^{-\frac{\hbar}{\epsilon}} = e^{\frac{S(x,z)}{\epsilon}}
$$
\subsection{Exposition of the material} 
In Section \ref{vertexfun} we recall a combinatorial formula for the vertex functions 
generalizing (\ref{hyperfunc}) to the case of $X=T^{*}\Gr(k,n)$. In Section \ref{mirrsymsect} we describe the analog of the integral representation (\ref{inrepex}) for this case. In Section~\ref{explimsec} we use this integral representation to compute the $\hbar\to\infty$ limit of $\Ver(z)$ similarly to  the example above. 

In our previous paper \cite{SmV23},  we show that certain truncations of  $\Ver(z)$  with parameters specialized to $\mathbb{Q}_p$ satisfy some Dwork type congruence relations. 
In Section~\ref{DworkCong} we show that a similar structure exists  in the limit 
$\hbar\to \infty$.

\subsection*{Acknowledgements}
We thank Thomas Lam for very useful comments. 
Work of A. Smirnov is partially supported by NSF grant DMS - 2054527 and by
the RSF under grant 19-11-00062.
Work of A. Varchenko is partially supported by NSF grant DMS - 1954266.

\section{The vertex function of $T^{*}\Gr(k,n)$ \label{vertexfun}}

\subsection{}

For $X=T^{*}\Gr(k,n)$ we consider the following explicit power series:
\bean \label{verfun}
\langle \textsf{Vertex}(z),[1,\dots, k] \rangle: =\sum_{d=0}^{\infty} \, c_{d}(u_1,\dots, u_n,\hbar) \, z^d 
\eean
with the coefficients $c_{d}(u_1,\dots, u_n,\hbar)\in \matQ(u_1,\dots,u_n,\hbar,\epsilon)$ given by:
\begin{small}
\bean \label{coeffsd}
 c_{d}(u_1,\dots, u_n,\hbar)=\sum\limits_{{d_1,\dots,d_k:}\atop {d_1+\dots+d_k=d}}\,  \Big(\prod\limits_{i,j=1}^{k} \dfrac{(\epsilon-u_i+u_j)_{d_i-d_j}}{(\hbar-u_i+u_j)_{d_i-d_j}} \Big) \Big(\prod\limits_{j=1}^{n} \prod\limits_{i=1}^{k}\, \dfrac{(\hbar+u_j-u_i)_{d_i}}{(\epsilon+u_j-u_i)_{d_i}} \Big),
\eean
\end{small}
where $(x)_d$ denotes the Pochhammer symbol with step $\epsilon$:
$$
(x)_d =\left\{ \begin{array}{rr}
x (x+\epsilon)\dots (x+(d-1) \epsilon), & d>0\\
1, & d=0\\
\dfrac{1}{(x-\epsilon)(x-2 \epsilon) \dots (x +d \epsilon)  }, & d<0
\end{array}\right. 
$$
The degree $d$ coefficient of this series counts (equivariantly) the number of degree $d$ rational curves in $X$. More precisely, it is given by the equivariant integral
\bean \label{cinteg}
c_{d}(u_1,\dots, u_n,\hbar) = \int\limits_{ [\qm_d(X,\infty)]^{\textrm{vir}}} \, \omega^{vir}
\eean 
over the virtual fundamental class on moduli space $\qm_d(X,\infty)$ of quasimaps from $\mathbb{P}^1$ to $X$, which send $\infty \in \mathbb{P}^1$ to a prescribed torus fixed point $[1,\dots, k]\in X$, see Section 7.2 of \cite{Oko17} for definitions.
 Using the equivariant localization,  the integral (\ref{cinteg}) reduces to the sum over the torus fixed points on $\qm_d(X,\infty)$ which gives the sum (\ref{coeffsd}). We refer to Section 4.5 of \cite{PSZ16} where this computation is done in some details.

The parameters $u_1,\dots, u_n,\hbar,\epsilon$ are the equivariant parameters of the torus $T=(\mathbb{C}^{\times})^{n}\times \mathbb{C}^{\times}_{\hbar} \times  \mathbb{C}^{\times}_{\epsilon}$ acting on the moduli space $\qm_d(X,\infty)$ in the following way: 
\begin{itemize}
    \item $(\mathbb{C}^{\times})^{n}$ acts on $\mathbb{C}^n$ in a natural way, scaling the coordinates with weights $u_1,\dots , u_n$. 

    \item The set of torus fixes points $X^{(\mathbb{C}^{\times})^{n}}$ corresponds
to $k$-subspaces in $\mathbb{C}^n$ spanned by any set of $k$ coordinate lines. The fixed point $[1,\dots, k] \in X^{(\mathbb{C}^{\times})^{n}}$  corresponds to the $k$-subspace spanned by the first $k$ coordinate lines. 
    
    \item $\mathbb{C}^{\times}_{\hbar}$ acts on $X$ by scaling the cotangent fibers with weight $\hbar$.
    
    \item $\mathbb{C}^{\times}_{\epsilon}$ acts on the source of the quasimaps $C\cong\mathbb{P}^{1}$ 
     fixing the points $0,\infty \in \mathbb{P}^{1}$. The parameter $\epsilon$ denotes the corresponding weight of the tangent space $T_{0}\, C$.
\end{itemize}

The full vertex function is a power series 
with coefficients in equivariant cohomology:
$$
\textsf{Vertex}(z) \in H^{\bullet}_{T} (X)_{loc}[[z]]
$$
where $loc$ denotes the equivariant localization with respect to torus $\mathbb{C}^{\times}_{\epsilon}$.
Using the equivariant localization, we can expand $\textsf{Vertex}(z)$ in the basis of $H^{\bullet}_{T\times \mathbb{C}^{\times}_{\epsilon}} (X)_{loc}$ given by the classes of torus fixed points. The power series (\ref{verfun}) gives the coefficient $\textsf{Vertex}(z)$ 
at the ``first'' torus fixed point $[1,\dots,k]$.  Other coefficients have the same structure and can be obtained from (\ref{verfun}) by permutations of parameters $u_i$.

\subsection{} 

In this paper we consider the specialization of the equivariant parameters:
\bean \label{specializ}
&& 
u_1=0, \dots, u_n =0
\eean
which corresponds to non-equivariant limit when the action of the torus $(\mathbb{C}^{\times})^{n}$ is ``turned off''. The coefficients of $\textsf{Vertex}(z)$ at the torus fixed points all reduce to the same function
(simply because without $(\mathbb{C}^{\times})^{n}$-action these points are indistinguishable) which corresponds to the coefficient of the vertex function at the fundamental class:
\bean \label{verfundefn}
\Ver(z):=\left.\Big\langle \textsf{Vertex}(z), [X] \Big\rangle\right|_{u_1=0,\dots, u_n=0} 
\eean 
Thus, $\Ver(z)$ can be obtained by specializing the coefficients of the power series (\ref{verfun}) at (\ref{specializ}). We note that this specialization is non-trivial: already in the case of $T^{*}\Gr(2,4)$ the terms in the sum (\ref{coeffsd})
have poles at $u_i=u_j$. The total sum (\ref{coeffsd}) is, however, non-singular  
since the vertex function is an integral equivariant cohomology class (we recall that only $\mathbb{C}^{\times}_{\epsilon}$ - localization is required to define it).

\section{Integral representation of $\Ver(z)$ \label{mirrsymsect}}

\subsection{} 
In this section we describe an integral representation for the function (\ref{verfundefn})
$$
\Ver(z) = \int_{\gamma}\, \Phi(x,z)\, dx
$$
which has its origin in $3D$-mirror symmetry, we refer to
Section 3 of \cite{SmV23} for more details.

\subsection{} 
Assume that $n\geq 2k$. Let $\bv_i$, $i=1,\dots, n-1$ be integers defined by:
$$
\bv_i=\left\{\begin{array}{ll}
i, & i< k,\\
k, & k \leq i \leq n-k,\\
n-i, & n-k < i,
\end{array}\right. 
$$
We denote by $\om=\hbar/\epsilon$ and define the {\it superpotential} function:
\bean 
\label{superpot}
\Phi(x,z)
&=&
\Big(\prod\limits_{i=1}^{n-1} \prod\limits_{j=1}^{\bv_i}\, x_{i,j} \Big)^{-1+\om} 
\Big(\prod\limits_{m=1}^{\bv_m}\, \prod\limits_{1\leq i<j \leq \bv_m}  (x_{m,j}-x_{m,i})\Big)^{2\om}
\\
\notag
&\times&
\Big(\prod\limits_{i=1}^{n-2} \prod\limits_{a=1}^{\bv_i} \prod\limits_{b=1}^{\bv_{i+1}} (x_{i,a}-x_{i+1,b})\Big)^{-\om} 
\Big(\prod\limits_{i=1}^{k} (z_{1}-x_{k,i}) (z_{2}- x_{n-k,i})\Big)^{-\om} .
\eean
We note that this function is an example of the {\it master functions} in the theory of integral representations of the trigonometric Knizhnik-Zamolodchikov equations.  In particular, (\ref{superpot}) corresponds to
the KZ equation associated with the weight subspace of weight [1,\dots,1] in the tensor product the $k$-th and $(n-k)$-th fundamental representations of $\frak{gl}_n$, see \cite{SV91,MV02}.

\subsection{}
The dimension vector $\bv_i$ and the variables $x_{i,j}$ have a convenient combinatorial visualization. Let us consider a $k\times (n-k)$ rectangle rotated counterclockwise by $45^{\circ}$, see Fig. \ref{xboxref}. Note that in this picture the number of boxes in $i$-th vertical column is exactly $\bv_i$. In this way, we may assign the variables $x_{i,j}$ to the boxes in this picture. We will order them as in 
Fig. \ref{xboxref}. 
\begin{figure}[h!]
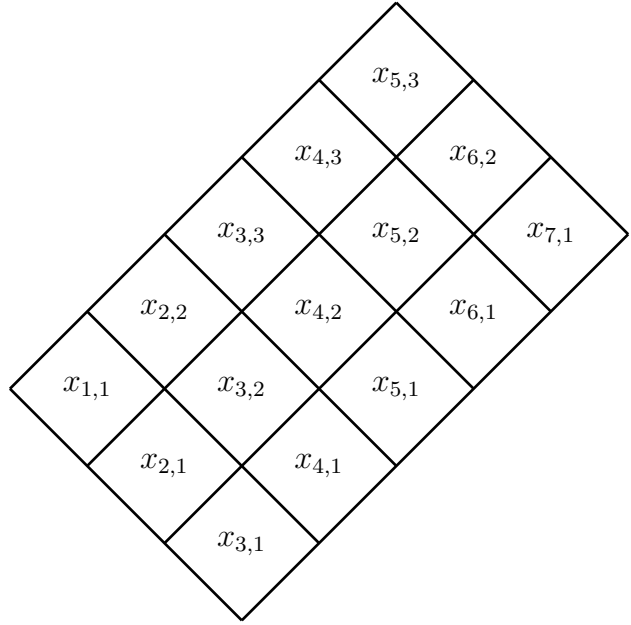

$$
\resizebox{0.5\textwidth}{!}{%
\usebox{\xbox}}
$$
\caption{Set of variables $x_{i,j}$ for $k=4$ and $n=8$. \label{xboxref}} 
\end{figure} Note that the total number of variables $x_{i,j}$ equals to $\dim \Gr(k,n) = k(n-k)$. 
To a box $(i,j)$ in the Fig. \ref{xboxref} we assign a weight 
\bean \label{weightfundef}
m_{i,j}= (|i-k|+2 j-1) \in \mathbb{N}
\eean
This function ranges from $m_{k,1}=1$ to $m_{n-k,k}=n-1$. The definition of $m_{i,j}$ is clear from the Fig.\ref{mweightpic}.
We have a partial ordering on the boxes $(i,j)$ corresponding to:
\bean
 \label{ordere}
m_{k,1}< m_{k-1,1} =  m_{k+1,1} < \dots  <m_{n-k,k}
\eean 
\begin{figure}[h!]
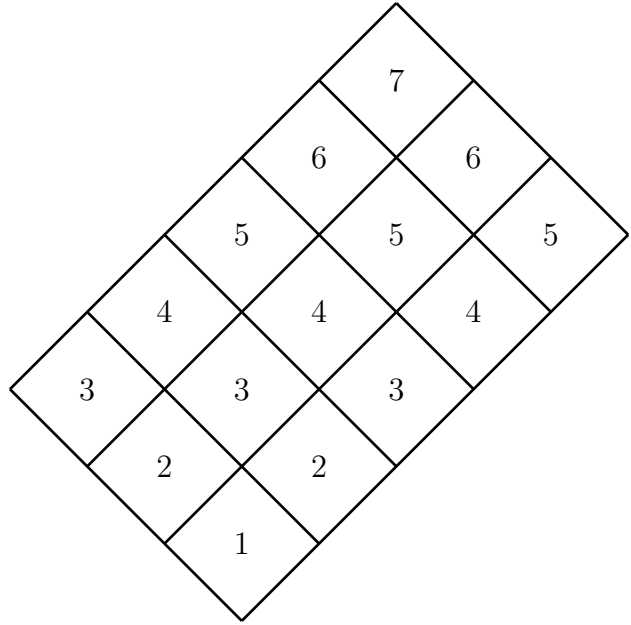

$$
\resizebox{0.5\textwidth}{!}{%
\usebox{\hweights}}
$$
\caption{The values of the weight function $m_{i,j} \label{mweightpic}$} 
\end{figure}
For a small real number $0<\varepsilon\ll 1$ let us define  the torus  by the following equations:
\bean \label{conctour}
\gamma_{k,n} \subset \matC^{k(n-k)}, \ \ \ |x_{i,j}|= m_{i,j} \varepsilon
\eean
where $i, j$ run through all possible values.

\begin{prop} [\cite{SmV23}]
 \label{branchprop} 

 Assume that $|z_{1}|<\varepsilon$ and $(n-1) \varepsilon <|z_{2}|$, then  the superpotential (\ref{superpot}) has 
a single-valued branch on the torus $\gamma_{k,n}$, which is distinguished in the proof and which 
will be used in the paper.

\end{prop}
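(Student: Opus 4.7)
The plan is to fix a branch of $\Phi$ at a distinguished reference point on $\gamma_{k,n}$ and then verify single-valuedness by showing that the monodromy along each generator of $\pi_1(\gamma_{k,n})$ is trivial. Take as reference point $x_{i,j}=m_{i,j}\varepsilon$ (all variables real and positive) and declare the branch by taking the principal value of each factor of $\Phi$ there. Before measuring monodromies, note that $\Phi$ is regular and nonzero on $\gamma_{k,n}$: in any same-column difference $(x_{m,j}-x_{m,i})$ the two boxes have distinct weights, while in any adjacent-column difference $(x_{i,a}-x_{i+1,b})$ the weights $|i-k|+2a-1$ and $|i+1-k|+2b-1$ have opposite parities and hence are unequal, and the hypotheses $|z_1|<\varepsilon$ and $|z_2|>(n-1)\varepsilon$ keep $z_1$ and $z_2$ away from the $x_{i,j}$ loci on the torus.

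I next compute the monodromy along the loop $\ell_{i,j}\subset\gamma_{k,n}$ that winds $x_{i,j}$ once counterclockwise around $|x_{i,j}|=m_{i,j}\varepsilon$ with all other variables frozen; the loops $\ell_{i,j}$ generate $\pi_1(\gamma_{k,n})$. A factor $(x_{i,j}-y)^\alpha$ with $y$ fixed picks up $e^{2\pi\sqrt{-1}\alpha}$ when $|y|<m_{i,j}\varepsilon$ and is otherwise trivial, while $x_{i,j}^{-1+\omega}$ always picks up $e^{2\pi\sqrt{-1}\omega}$. Using $|z_1|<\varepsilon\leq|x_{k,j}|$ and $|z_2|>(n-1)\varepsilon\geq|x_{n-k,j}|$, the $z_1$-factor contributes $-\omega$ exactly when $i=k$, and the $z_2$-factor never contributes. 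Enumerating the factors of $\Phi$ that touch $x_{i,j}$, the total exponent in units of $2\pi\sqrt{-1}$ is
\[
E_{i,j}\;=\;\omega\;+\;2\omega(j-1)\;-\;\omega\bigl(n^{-}_{i,j}+n^{+}_{i,j}\bigr)\;-\;\omega\,\delta_{i,k},
\]
where $n^{\pm}_{i,j}$ denotes the number of boxes in column $i\pm 1$ of weight strictly less than $m_{i,j}$ and the $2\omega(j-1)$ term accounts for the $j-1$ smaller-weight boxes in the same column.

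The remaining step is the combinatorial bookkeeping of $n^{\pm}_{i,j}$ from $m_{i,j}=|i-k|+2j-1$. By the staircase pattern of the weights one obtains $(n^{-}_{i,j},n^{+}_{i,j})=(j-1,j)$ for $i<k$, $(j-1,j-1)$ for $i=k$, and $(j,j-1)$ for $i>k$, with the natural convention $n^{\pm}_{i,j}=0$ when column $i\pm 1$ does not exist (the formulas remain consistent at the boundary columns $i\in\{1,n-1\}$ and at the coincidence $k=n-k$ when $n=2k$). Substitution yields $E_{i,j}=0$ identically, so the chosen branch is invariant under each $\ell_{i,j}$ and hence single-valued on $\gamma_{k,n}$. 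The substantive content of the argument is that the choice of radii $m_{i,j}=|i-k|+2j-1$ is precisely what makes the staircase neighbor counts cancel the monomial exponent together with the same-column contribution in every column range; no separate treatment is required for the edge columns or for the coincidence $k=n-k$, which is the main point a careful proof must make.
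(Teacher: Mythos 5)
Your proof is correct, and it takes a genuinely different route from the paper's. The paper's argument is constructive: it rewrites each factor of the superpotential \eqref{superpot} as a power of $1-w$, where $w$ is a ratio of coordinates (or of a coordinate with $z_1$ or $z_2$) satisfying $|w|<1$ on $\gamma_{k,n}$ because of the choice of radii $m_{i,j}\varepsilon$, and then \emph{defines} the branch as the product of the corresponding convergent binomial series, as in \eqref{Lfacts}; single-valuedness is then immediate, and this explicit series form is precisely what is used afterwards to evaluate the integral representation and to take the $\hbar\to\infty$ limit. You instead fix the principal branch at the positive real base point and kill the monodromy along each coordinate loop of the torus by a linking-number count, which is a legitimate alternative since $\pi_1(\gamma_{k,n})\cong\mathbb{Z}^{k(n-k)}$ is generated by your loops $\ell_{i,j}$ and the monodromy of a product of complex powers of linear forms factors through $H_1$. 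It is worth noting that your identity $E_{i,j}=0$ is exactly the fact the paper uses implicitly: passing from \eqref{superpot} to \eqref{Lfacts} requires that, after all ratios are extracted, the residual exponent of each $x_{i,j}$ be $-1$ with no leftover multiple of $\om$, and the coefficient of $\om$ in that residual exponent is your $E_{i,j}/\om$. I checked your staircase counts $(n^-_{i,j},n^+_{i,j})$ in all three ranges of $i$, including the boundary columns and the case $n=2k$, and they are right; your use of the hypotheses $|z_1|<\varepsilon=m_{k,1}\varepsilon$ and $(n-1)\varepsilon=m_{n-k,k}\varepsilon<|z_2|$ matches the paper's. The only trade-off is that your proof establishes existence of the branch without producing the series expansion that the rest of the paper relies on, and your branch may differ from the paper's distinguished one by a constant phase (the normalization $\alpha$ in Theorem \ref{verttheorem}), which is immaterial for the proposition itself.
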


\begin{proof}
 Let us denote 
 \bean \label{Lfactor}
 L(x_{i,a},x_{j,b})=\left\{\begin{array}{ll}
 (1-x_{i,a}/x_{j,b})^{-\om}, & m_{i,a} < m_{j,b}, \  i\ne j\\
(x_{j,b}/x_{i,a}-1)^{-\om}, & m_{i,a} > m_{j,b},  \   i\ne j.
 \end{array}\right.
 \eean
Each of these 
ratios $x_{i,a}/x_{j,b}$,
$x_{j,b}/x_{i,a}$ restricted to $\gamma_{k,n}$
has absolute value less than 1.
We replace  $ (1-x_{i,a}/x_{j,b})^{-\om}$ on $\gamma_{k,n}$
with 
$\sum_{m=0}^\infty\binom{-\om}{m}(-x_{i,a}/x_{j,b})^m$ and replace
\\
$(x_{j,b}/x_{i,a}-1)^{-\om}$ with 
$e^{-\pi \sqrt{-1}\om} \sum_{m=0}^\infty\binom{-\om}{m}(-x_{j,b}/x_{i,a})^m$.

Next, we denote 
$L(z_{1},x_{k,a})= (1-z_{1}/x_{k,a})^{-\om}$ and 
$L(z_{2},x_{k,a})=
\\
 (1-x_{n-k,a}/z_{2})^{-\om }$.
On  $\gamma_{k,n}$ we have $|x_{k,i}|\geq \epsilon$, and $|x_{n-k,i}|\leq |x_{n-k,k}|= n \epsilon$, therefore
$|z_{1}/x_{k,i}|<1$ and $|x_{n-k,i}/z_{2}|<1$. 
We replace on $\gamma_{k,n}$ the factor
$(1-z_{1}/x_{k,a})^{-\om }$ with 
$\sum_{m=0}^\infty\binom{-\om}{m}(-z_{1}/x_{k,a})^m$ and the factor 
$(1-x_{n-k,a}/z_{2})^{-\om}$ with
$\sum_{m=0}^\infty\binom{-\om}{m}(-x_{n-k,a}/z_{2})^m$.

Finally, we denote $\Delta(x_{m,i},x_{m,j})=
 (1-x_{m,i}/x_{m,j})^{2\om }$ for $1\leq i<j\leq \bv_m$.
On $\ga_{k,n}$ we have $|x_{m,i}/x_{m,j}|<1$.
We replace on $\gamma_{k,n}$ the factor
$\Delta(x_{m,i},x_{m,j})$ with 
$\sum_{m=0}^\infty\binom{2\om}{m}(-x_{m,i}/x_{m,j})^m$.

In these notations we have:
 \bean \label{Lfacts}
 \\
 \notag
 \Phi(x,z)=  \frac{\Big(\prod\limits_{i=1}^{n-1} \prod\limits_{a<b}\Delta(x_{i,a},x_{i,b})\Big)\,
\Big(\prod\limits_{i=1}^{n-2} \prod\limits_{a=1}^{\bv_i} \prod\limits_{b=1}^{\bv_{i+1}} L(x_{i,a},x_{i+1,b})\Big) 
\Big(\prod\limits_{i=1}^{k} L(z_{1},x_{k,i})  L(z_{2},x_{n-k,i})\Big)}{ \prod\limits_{i=1}^{n-1}  \prod\limits_{j=1}^{\bv_i} x_{i,j}  } ,
 \eean
and for each factor a single-valued branch is chosen
 by replacing that factor with the corresponding power series. The product of those power series distinguishes a single-valued branch of 
 $\Phi(x,z)$ on $\gamma_{k,n}$.
\end{proof}

\begin{example} For $X=T^*\Gr(2,4)$ we have
\bea
\Phi(x,z)
&=& (x_{11}x_{21}x_{22}x_{31})^{-1}
\\
&\times &
 (1-x_{21}/x_{22})^{2\om} 
\big( (x_{21}/x_{11}-1) (1-x_{21}/x_{31}) (z_{1}/x_{21}-1)(1-x_{21}/z_{2})\big)^{-\om}
\\
&\times& 
\big( (1-x_{11}/x_{22}) (x_{31}/x_{22}-1) (z_{1}/x_{22}-1)(1-x_{22}/z_{2})\big)^{-\om}.
\eea

\end{example}

From the previous proposition, the integral of $\Phi(x,z)$ over $\gamma_{k,n}$ is an analytic function of $z=z_{1}/z_{2}$ in the disc $|z|<\epsilon$. 
\begin{thm}[\cite{SmV23}] 
\label{verttheorem}
The  function (\ref{verfundefn})   has the following 
integral representation
\bean 
\label{vertexInt}
\Ver(z) = \dfrac{\alpha}{(2 \pi \sqrt{-1})^{k(n-k)}} \oint\limits_{\gamma_{k,n}}\, \Phi(x,z) \, \bigwedge\limits_{i,j} dx_{i,j} 
\eean 
where $\Phi(x,z)$ is the branch of superpotential function (\ref{superpot})
 on the torus $\gamma_{k,n}$ chosen in Proposition \ref{branchprop}, and $\alpha=e^{\pi\sqrt{-1}N\om}$ 
 is a normalization constant  where $N$ is the number  of factors 
in \eqref{Lfacts} having the form $(x_{j,b}/x_{i,a}-1)^{-\om}$. 
\end{thm}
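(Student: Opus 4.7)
The plan is to prove \eqref{vertexInt} by expanding both sides as formal power series in $z = z_1/z_2$ and comparing coefficients. On the right-hand side the full $z_1,z_2$-dependence is concentrated in the two external factors $(z_1 - x_{k,i})^{-\om}$ and $(z_2 - x_{n-k,i})^{-\om}$; after expansion in the branch of Proposition \ref{branchprop}, only the ratio $z_1/z_2 = z$ survives the torus integration, so the right-hand side is an honest power series in $z$. On the left, $\Ver(z)$ is already given by \eqref{verfun}--\eqref{coeffsd} at $u_1 = \cdots = u_n = 0$, so the task reduces to matching the coefficient of $z^d$ for every $d \ge 0$.

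First, I would replace each factor in $\Phi$ on $\ga_{k,n}$ by the binomial series assigned to it in the proof of Proposition \ref{branchprop}. The product becomes a single convergent multi-Laurent series
$$\Phi(x,z) \,=\, \frac{1}{\prod_{i,j} x_{i,j}}\, \sum_{\bs m}\, C_{\bs m}(\om)\, z^{|\bs m|}\, M_{\bs m}(x),$$
where $\bs m$ ranges over a multi-index set indexing all the binomial expansions, $C_{\bs m}(\om)$ is a product of generalized binomials $\binom{-\om}{\cdot}$ and $\binom{2\om}{\cdot}$, and $M_{\bs m}(x)$ is a Laurent monomial in the $x_{i,j}$ with exponents linear in $\bs m$. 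Applying $\frac{1}{2\pi\sqrt{-1}}\oint x^r\,dx/x = \delta_{r,0}$ in each variable, the torus integral on the right of \eqref{vertexInt} extracts the $x$-constant term, which imposes an integer linear system on $\bs m$. Parametrizing its solutions for fixed $z^d$ by $k$ \emph{outer} integers $d_1,\dots,d_k \ge 0$ with $d_1 + \cdots + d_k = d$ (recording how the $z_1$-powers distribute among the row-$k$ variables $x_{k,1},\dots,x_{k,k}$, with the $z_2$-powers forced to match), one obtains a finite summation formula for the coefficient of~$z^d$.

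The last step is to convert the binomials into Pochhammer symbols via $(-1)^m\binom{-\om}{m} = (\om)_m/(m!\,\epsilon^m)$ and to collapse the products. The two external factors together with the chains of $L$-factors propagating the weight down from row $k$ to row $1$ and up from row $n-k$ to row $n-1$ should yield the ``diagonal'' product $\prod_{j=1}^n\prod_{i=1}^k (\hbar)_{d_i}/(\epsilon)_{d_i}$ at $u=0$, while the within-row $\Delta$-factors and the pairwise $L$-factors among the $k$ variables of rows $k$ and $n-k$ produce the ``cross'' ratio $\prod_{i,j=1}^k (\epsilon)_{d_i - d_j}/(\hbar)_{d_i - d_j}$, matching \eqref{coeffsd}. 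Finally, the overall prefactor $\alpha = e^{\pi\sqrt{-1}N\om}$ cancels the $N$ phases $e^{-\pi\sqrt{-1}\om}$ that were introduced in Proposition \ref{branchprop} when a factor of the form $(x_{j,b}/x_{i,a}-1)^{-\om}$ was expanded instead of $(1 - x_{i,a}/x_{j,b})^{-\om}$.

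The main obstacle is precisely the ``collapse'' in the final step: one must verify that all \emph{inner} indices $\bs m_{i,a}$ with $i \ne k,\,n-k$ sum out in closed form, leaving only $(d_1,\dots,d_k)$ as free parameters. This amounts to a nested sequence of single-variable hypergeometric summations of Vandermonde--Chu type, threaded through the chains of Figure \ref{xboxref}. Structurally this phenomenon reflects the fact that $\Phi$ is a master function for the weight $[1,\dots,1]$-subspace in the tensor product of the $k$-th and $(n-k)$-th fundamental representations of $\gl_n$, so its multidimensional hypergeometric integrals are expressible on the same combinatorial data as the quasimap fixed-point sum \eqref{coeffsd}; this is the SV/KZ viewpoint alluded to after \eqref{superpot}, and it is what makes the coincidence of the two sides inevitable.
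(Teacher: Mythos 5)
This theorem is quoted from \cite{SmV23}; the present paper contains no proof of it, so there is no internal argument to compare yours against and I can only assess your proposal on its own terms. Your overall strategy --- expand each factor of $\Phi$ on $\ga_{k,n}$ into the binomial series fixed in Proposition \ref{branchprop}, integrate term by term so the torus integral extracts the $x$-constant term, and match coefficients of $z^d$ against the fixed-point formula --- is a reasonable outline, and your accounting of the phase $\alpha=e^{\pi\sqrt{-1}N\om}$ as cancelling the $N$ factors $e^{-\pi\sqrt{-1}\om}$ introduced in the branch choice is correct.

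However, there is a genuine gap at precisely the step you flag as ``the main obstacle,'' and it is worse than a missing hypergeometric summation. The target of your matching is the $u_1=\cdots=u_n=0$ specialization of the localization sum (\ref{coeffsd}) \emph{term by term} in $(d_1,\dots,d_k)$. That specialization does not exist term by term: as the paper notes after (\ref{verfundefn}), already for $T^*\Gr(2,4)$ the individual summands of (\ref{coeffsd}) have poles at $u_i=u_j$ (concretely, $(\epsilon-u_i+u_j)_{d_i-d_j}$ with $d_i<d_j$ contributes a factor $1/(u_j-u_i)$), and only the full sum has a finite limit. Hence the ``cross ratio'' $\prod_{i,j}(\epsilon)_{d_i-d_j}/(\hbar)_{d_i-d_j}$ you propose to produce on the integral side is literally undefined whenever the $d_i$ are not all equal, and a term-by-term identification indexed by $(d_1,\dots,d_k)$ cannot be the right bookkeeping. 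To repair this you would need either (a) to prove a $u$-generic integral representation first (with a $u$-dependent master function, which (\ref{superpot}) is not) and only then let $u\to 0$ on both sides, or (b) to compute the $u\to 0$ limit of the whole sum (\ref{coeffsd}) in closed form --- essentially the combinatorial formula of \cite{BCKS98,MaRi} --- and match that with the constant-term combinatorics of your expansion. Either route is substantial and is not supplied by the Vandermonde--Chu heuristic; until one is carried out, the proposal remains an unverified plan rather than a proof.
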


\begin{defn} \label{defequiv}
 Let $\gamma^{'}_{k,m}$ be another contour defined by 
 $|x_{i,j}| = R_{i,j}$ for $R_{i,j} \in \mathbb{R}$ such that 
 $|z_{1}| < R_{1,1},  R_{n-k,k} <|z_{2}|$ and the conditions 
 \bean \label{contcond}
 m_{i,j}<m_{a,b} \Longrightarrow R_{i,j}<R_{a,b}
 \eean
 are satisfied for all pairs of indices $(i,j)$ and $(a,b)$. Then, we say that 
 $\gamma^{'}_{k,m}$ is homologous
  to $\gamma_{k,m}$ and write $\gamma^{'}_{k,m}\sim\gamma_{k,m}$.
\end{defn}

Note that (\ref{vertexInt}) remains invariant if we replace $\gamma_{k,m}$ by a homologous $\gamma^{'}_{k,m}$.  This is simply because the evaluation of the integral over $\gamma'_{k,n}$ is again
reduces to computing the residues at $x_{i,j}=0$, and the residues are computed
 in the same order as for $\gamma_{k,n}$. This implies that the result remains the same.

\subsection{Relation to $3D$-mirror symmetry}
Let us explain the origin of the superpotential function (\ref{superpot}). The factors of  (\ref{superpot}) correspond to the edges of the quiver which describes the $3D$-{\it mirror variety} $X^{!}$.
For $X=T^{*}\Gr(k,n)$ the quiver of $X^{!}$ is given in Sections 3.2-3.3 of 
\cite{SmV23}, the correspondence between the factors of (\ref{superpot}) and the edges of this quiver is also explained there.

For for general Nakajima quiver varieties, the superpotential function (\ref{superpot})
 is constructed by the same procedure if the quiver for the $3D$-mirror variety
  $X^{!}$ is known. For the Nakajima quiver varieties of type $A$,
 which include cotangent bundles over partial flag varieties as special
  cases, a conjectural description of the $3D$-mirrors was given by physicists.
   It is explained for instance in \cite{GK}.
    We expect that the results of this note and of \cite{SmV23} have 
    straightforward generalizations to those cases.

The $3D$-mirror symmetry conjecture is formulated on the level of K-theory rather than cohomology. Recall, that the quantum difference equations \cite{OS22} are the K-theoretic generalizations of quantum differential equations in quantum cohomology. The $3D$-mirror symmetry conjecture claims that the quantum difference equations for $X$ and $X^{!}$ are equivalent.  The K-theoretic vertex functions of $X$ and $X^{!}$ provide two different bases of solutions to the this common system of $q$-difference equations. For cotangent bundles over Grassmannians this conjecture was proved by Dinkins in \cite{Din20} and for full flag varieties in \cite{Din22}. For the hypertoric varieties this result is obtained in \cite{SZ22}.

An alternative definition of $3D$-mirror symmetry postulates  the equality of the elliptic stable envelopes \cite{AO21} of $X$ and $X^{!}$. This idea was first proposed in \cite{Oko18} and later examined for various cases of $X$ in \cite{RSVZ19,RSVZ21,RW,SZ22}.  It was shown in \cite{KoSm,Ko} that the elliptic stable envelope of $X$ determines the corresponding quantum difference equation of $X$ 
and vice versa. This established an equivalence between the two definitions of $3D$-mirror symmetry.

Theorem \ref{themintro} says that the mirror description of
the $J$-function for $\Gr(k,n)$ arises as a double limit of $3D$-mirror symmetry.  In the first limit one considers the cohomological limit of the K-theoretic vertex functions for $T^{*}\Gr(k,n)$. In this limit, the $3D$-mirror symmetry description of these functions \cite{Din20} degenerates to the integral representation (\ref{vertexInt}).  In the second limit $\hbar\to \infty$ we obtain Theorem \ref{themintro}.

\section{The limit $\hbar\to \infty$ \label{explimsec}} 
\subsection{Polynomial superpotential}
Let $\Gamma$ be an oriented graph, with vertices given by boxes inside the $k\times (n-k)$ Young diagram, plus two extra vertices corresponding to $z_{1}$ and $z_{2}$, see Fig.\ref{boxgraph}. The edges of the graph are defined as follows: every two adjacent boxes are connected by an edge. 
Each edge is oriented in the direction of decrease of weight function $m_{i,j}$, which is defined by
(\ref{weightfundef}). Two additional edges are  from  $x_{k,1}$ to $z_{1}$ and from $z_{2}$ to $x_{n-k,k}$,  Fig.\ref{boxgraph}. Given an edge $e$ of $\Gamma$  we denote by $h(e)$ and $t(e)$ the corresponding head and tail. We define the following Laurent polynomial:
\bean  \label{polsuper}
S({x},{z})=\sum\limits_{e \in \mathrm{edges}(\Gamma)} \, \dfrac{x_{h(e)}}{x_{t(e)}}\,.
\eean

\begin{figure}[h!]
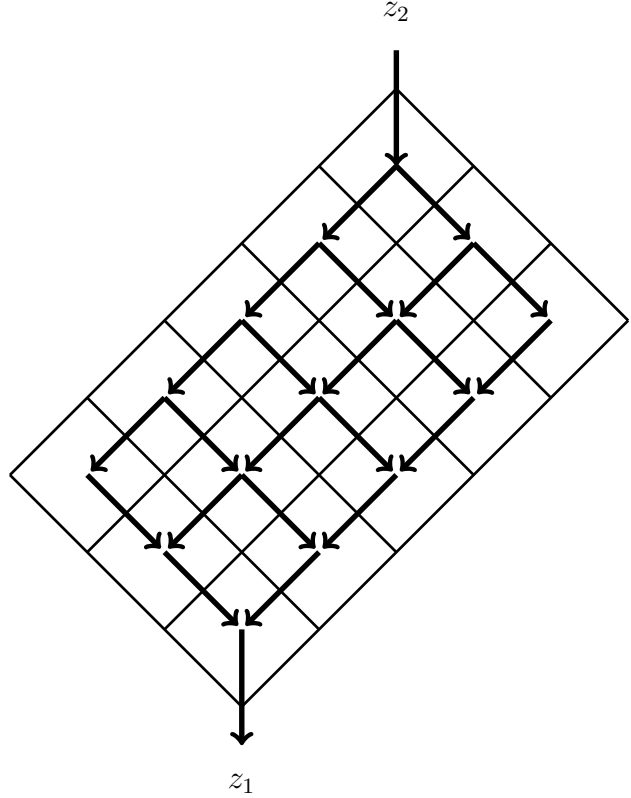

$$
\resizebox{0.5\textwidth}{!}{%
\usebox{\gbox}}
$$
\caption{The graph $\Gamma$ associated with $X=T^{*}\Gr(k,n)\label{boxgraph}.$} 
\end{figure}

\begin{example}
For $k=1$  we obtain:
\bean \label{projsuper}
S({x},{z})=
\dfrac{z_{1}}{x_{1,1}} + \dfrac{x_{1,1}}{x_{2,1}}+ \dfrac{x_{2,1}}{x_{3,1}} + \dots +  
\dfrac{x_{n-2,1}}{x_{n-1,1}} +  \dfrac{x_{n-1,1}}{z_{2}}\,.
\eean
Substituting $z_{1}=q, z_{2}=1$, and introducing new variables by 
$$
x_{1,1}=a_{1} a_2 \dots a_{n-1}, \ \ x_{2,1}=a_{1} a_2 \dots a_{n-2},  \ \ \dots, \ \ x_{n-1,1}=a_{1}\,.
$$
We arrive at the standard Givental's superpotential of projective space
$$
S(x)=a_1+a_2+\dots+a_{n-1} + \dfrac{q}{a_1\cdots a_{n-1}}\,.
$$
\end{example}

\begin{example}
For $X=T^{*}\Gr(2,4)$ we obtain
\bean \label{supgr24}
S({x},{z})=\dfrac{z_{1}}{x_{2,1}} + \dfrac{x_{2,1}}{x_{1,1}}+ \dfrac{x_{2,1}} {x_{3,1}}
+ \dfrac{x_{3,1}}{x_{2,2}} + \dfrac{x_{1,1}}{x_{2,2}} +  \dfrac{x_{2,2}}{z_{2}}\,.
\eean
\end{example}

\subsection{Exponential integral\label{expintdef}}
For $S({x},{z})$ defined by (\ref{polsuper}) we consider the power series:
\bean \label{expinteg}
\dfrac{1}{(2 \pi \sqrt{-1})^{k(n-k)}} \oint e^{\frac{1}{\epsilon} S({x},{z}) }\,  \bigwedge\limits_{i,j} \frac{dx_{i,j}}{x_{i,j}} \ \  := \ \ \sum\limits_{d=0}^{\infty} \, \dfrac{\left[S({x},{z})^d \right]_{0}}{d! \epsilon^d}\, 
\eean
where $\left[S({x},{z})^d \right]_{0}$ denotes the constant term of the Laurent polynomial $S({x},{z})^d$ in 
the variables ${x}=(x_{i,j})$. We assume that $\left[S({x},{z})^0 \right]_{0}=1$. 
From the structure of the superpotential (\ref{polsuper}) it is easy to see that 
$\left[S({x},{z})^d \right]_{0}$ is a monomial in $z=z_{1}/z_{2}$ and thus  (\ref{expinteg}) is a power series in $z$. 

\begin{example}
For $X=T^{*} \mathbb{P}^{n-1}$, the superpotential is given by (\ref{projsuper}).  In this case, elementary computations shows that 
$\left[S({x},{z})^d\right]_{0}$ is non-vanishing only if the degree $d$ is of the form 
$d=n m$ for some $m \in \mathbb{N}$. In this case
we have
$$
\left[S({x},{z})^{m n}\right]_{0} = \dfrac{(z_1/z_2)^m (n m)!}{(m!)^n}\,.
$$
We thus conclude that
$$
\oint e^{\frac{1}{\epsilon} S({x},{z}) }\,  \bigwedge\limits_{i,j} \frac{dx_{i,j}}{x_{i,j}} = \sum\limits_{m=0}^{\infty}\, \dfrac{z^m}{(m!)^n \epsilon^{m n}}\,.
$$
\end{example}

\begin{example}
For $X=T^{*}\Gr(2,4)$, the superpotential is given by (\ref{supgr24}).  In this case
$\left[S({x},{z})^d\right]_{0}$ is non-vanishing only if $d=4 m$, in which case
$$
\left[S({x},{z})^{4m}\right]_{0}= \dfrac{(2m)! (4m)!}{(m!)^6} \dfrac{z^m}{ \epsilon^{4m}}\,.
$$
Thus, we obtain:
$$
\oint e^{\frac{1}{\epsilon} S({x},{z}) }\,  \bigwedge\limits_{i,j} \frac{dx_{i,j}}{x_{i,j}} = \sum\limits_{m=0}^{\infty}\, \dfrac{(2m)!}{(m!)^6 \epsilon^{4 m}}\, z^m.
$$
\end{example}

\subsection{The vertex function in the $\hbar\to\infty$ limit}

\begin{thm} \label{mainthm}
Let $\Ver(z)$ be the function (\ref{verfundefn}), then:
$$
\lim_{\hbar \to \infty}\, \Ver(z/\hbar^{n}) =  \dfrac{1}{(2 \pi \sqrt{-1})^{k(n-k)}} \oint e^{\frac{1}{\epsilon} S({x},{z}) }\,  \bigwedge\limits_{i,j} \frac{dx_{i,j}}{x_{i,j}} 
$$
where the integral is defined by (\ref{expinteg}) and $S({x},{z})$ is the  polynomial superpotential (\ref{polsuper}).
\end{thm}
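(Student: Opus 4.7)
The plan is to start from the integral representation of Theorem \ref{verttheorem}, perform an $\hbar$-dependent rescaling of the integration variables, and deform the resulting contour to an $\hbar$-independent one on which the integrand converges pointwise to $e^{S(x,z)/\epsilon}$.

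Set $z_1=z/\hbar^n$ and $z_2=1$, so that $\Ver(z/\hbar^n)$ is given by the integral of Theorem \ref{verttheorem} over any representative of the homology class of Definition \ref{defequiv}. Introduce new variables $\tilde x_{i,j}=x_{i,j}\,\hbar^{\,n-m_{i,j}}$. This substitution preserves $\bigwedge_{i,j} dx_{i,j}/x_{i,j}$. The original contour $\gamma_{k,n}=\{|x_{i,j}|=m_{i,j}\varepsilon\}$ becomes in the new coordinates the torus $\{|\tilde x_{i,j}|=m_{i,j}\varepsilon\hbar^{\,n-m_{i,j}}\}$, whereas the $\hbar$-independent torus $\gamma'=\{|\tilde x_{i,j}|=m_{i,j}\varepsilon\}$ corresponds in the old variables to $\{|x_{i,j}|=m_{i,j}\varepsilon/\hbar^{\,n-m_{i,j}}\}$. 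A direct check shows that this second old contour still satisfies the inequalities of Definition \ref{defequiv} for $\hbar$ large: the $m$-ordering of radii is preserved, $|z_1|=|z|/\hbar^n<\varepsilon/\hbar^{\,n-1}$, and $(n-1)\varepsilon/\hbar<1=|z_2|$. Hence it is homologous to $\gamma_{k,n}$, and we may work on the fixed contour $\gamma'$ after the change of variables.

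Next, analyze the factors of $\Phi$ on $\gamma'$. After the substitution, an $L$-factor with $m_{i,a}<m_{j,b}$ becomes $(1-y)^{-\om}$ with $y=(\tilde x_{i,a}/\tilde x_{j,b})\hbar^{-(m_{j,b}-m_{i,a})}$; the opposite ordering gives $(u-1)^{-\om}$ with $u$ of the same type (this contributes an extra $e^{-i\pi\om}$ from the branch choice of Proposition \ref{branchprop}). The $z$-factors produce arguments of order $\hbar^{-m_{k,a}}$ and $\hbar^{-(n-m_{n-k,a})}$, while the Vandermonde factors $\Delta$ have arguments of order $\hbar^{-2(j-i)}$. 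Because $\om=\hbar/\epsilon$, a factor $(1-y)^{-\om}$ with $y\sim c\,\hbar^{-r}$ tends to $\exp(c/\epsilon)$ when $r=1$ and to $1$ when $r\ge 2$. Inspection of the weight function shows that the $r=1$ factors are precisely those attached to the edges of $\Gamma$, namely the $L$-factors between $m$-adjacent boxes together with $L(z_1,x_{k,1})$ and $L(z_2,x_{n-k,k})$. Each such factor contributes the matching monomial $x_{h(e)}/x_{t(e)}$ in $S$, so the product converges uniformly on $\gamma'$ to $\exp(S(\tilde x,z)/\epsilon)$; the $\Delta$-factors and the non-edge $L$-factors tend uniformly to $1$.

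Finally, the branch phases $e^{-i\pi\om}$ collected from each $(u-1)^{-\om}$ factor multiply to $e^{-i\pi N\om}$ and precisely cancel the normalization $\alpha=e^{i\pi N\om}$ from Theorem \ref{verttheorem}. Uniform convergence on the compact contour $\gamma'$ allows interchanging limit and integral:
\[
\lim_{\hbar\to\infty}\Ver(z/\hbar^n)=\frac{1}{(2\pi\sqrt{-1})^{k(n-k)}}\oint_{\gamma'}e^{S(\tilde x,z)/\epsilon}\bigwedge_{i,j}\frac{d\tilde x_{i,j}}{\tilde x_{i,j}},
\]
which equals the right-hand side by the constant-term definition (\ref{expinteg}). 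The main technical obstacle is the contour deformation: one must verify that shrinking $\gamma_{k,n}$ to the moving torus $\{|x_{i,j}|=m_{i,j}\varepsilon/\hbar^{\,n-m_{i,j}}\}$ keeps the branch of $\Phi$ chosen in Proposition \ref{branchprop} well defined, which reduces to the inequalities of Definition \ref{defequiv} and holds for $\hbar\gg 0$.
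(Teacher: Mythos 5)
Your proposal is correct and follows essentially the same route as the paper: start from the integral representation of Theorem \ref{verttheorem}, rescale each $x_{i,j}$ by a power of $\hbar$ dictated by the weight $m_{i,j}$, justify the contour move via Definition \ref{defequiv}, observe that only the factors attached to edges of $\Gamma$ (weight difference exactly $1$) survive as exponentials while the rest tend to $1$, cancel the branch phases against $\alpha$, and interchange the uniform limit with the integral. The only difference is cosmetic — you rescale $z_1$ down by $\hbar^{n}$ and use $\hbar^{\,n-m_{i,j}}$ where the paper rescales $z_2$ up and uses $\hbar^{\,m_{i,j}}$ — which agrees with the paper's computation by overall degree-zero homogeneity of the integrand.
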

\begin{proof}
By Theorem \ref{verttheorem} we have:
$$
\Ver(z) = \dfrac{\alpha}{(2 \pi \sqrt{-1})^{k(n-k)}} \oint\limits_{\gamma_{k,n}}\, \Phi(x,z) \, \bigwedge\limits_{i,j} dx_{i,j} 
$$
where the contour $\gamma_{k,n}$ is defined by (\ref{conctour}) and $\Phi(x,z)$ is the branch of the superpotential (\ref{Lfacts}) distinguished by Proposition \ref{branchprop}. It will be convenient to define:
 \bean
 \tilde{L}(x_{i,a},x_{j,b})=\left\{\begin{array}{ll}
 (1-x_{i,a}/x_{j,b})^{-\om}, & m_{i,a} < m_{j,b}, \\
(1-x_{j,b}/x_{i,a})^{-\om}, & m_{i,a} > m_{j,b}
 \end{array}\right.
 \eean
 which differ from (\ref{Lfactor}) by a factor
 $$ 
 \tilde{L}(x_{i,a},x_{j,b})=\left\{\begin{array}{ll}
 {L}(x_{i,a},x_{j,b}), & m_{i,a} < m_{j,b}, \\
 \\
e^{-\pi \sqrt{-1} \hbar/\epsilon} {L}(x_{i,a},x_{j,b}), & m_{i,a} > m_{j,b}.
 \end{array}\right.
$$ 
Recall that $\alpha=e^{\pi \sqrt{-1} N\hbar/\epsilon}$ where $N$ is the total number of factors in $\Phi(x,z)$ for which $\tilde{L}(x_{i,a},x_{j,b})/{L}(x_{i,a},x_{j,b}) = e^{-\pi \sqrt{-1} \hbar/\epsilon}$. Thus, in these notations we have
\bean \label{redefv}
\Ver(z) = \dfrac{1}{(2 \pi \sqrt{-1})^{k(n-k)}} \oint\limits_{\gamma_{k,n}}\, \tilde{\Phi}(x,z) \, \bigwedge\limits_{i,j} \dfrac{dx_{i,j}}{x_{i,j}} 
\eean
where 
$$
\tilde{\Phi}(x,z)= \Big(\prod\limits_{i=1}^{n-1} \prod\limits_{a<b}\Delta(x_{i,a},x_{i,b})\Big)
\Big(\prod\limits_{i=1}^{n-2} \prod\limits_{a=1}^{\bv_i} \prod\limits_{b=1}^{\bv_{i+1}} \tilde{L}(x_{i,a},x_{i+1,b})\Big) 
\Big(\prod\limits_{i=1}^{k} \tilde{L}(z_{1},x_{k,i})  L(z_{2},x_{n-k,i})\Big). 
$$
In this integral we rescale the variables by:
$
z_{1} \to z_{1}, \ \ z_{2}\to z_{2} \hbar^{n},
$
since in our notations $z=z_{1}/z_{2}$, this is equivalent to substitution 
$z\to z/\hbar^{n}$ in the left-hand side of (\ref{redefv}). Let $\gamma'_{k,n}(\hbar)$ be the
 contour defined by the conditions
$$
|x_{i,j}|= m_{i,j} \varepsilon |\hbar|^{m_{i,j}}\,.
$$
Assuming that $|\hbar| > 1$ we have
$$
m_{i,j}<m_{a,b} \ \ \ \Longrightarrow \ \ \  m_{i,j}   \varepsilon |\hbar|^{m_{i,j}} < m_{a,b} \varepsilon |\hbar|^{m_{a,b}}
$$
for all pairs $(i,j)$ and $(a,b)$. By assumption of Proposition \ref{branchprop}, we have $|z_{1}|<|x_{1,1}|$ and 
 $|x_{n-k,k}| <|z_{2} \hbar^{n}|$ on $\gamma'_{k,n}(\hbar)$.  Therefore $\gamma'_{k,n}(\hbar)\sim \gamma_{k,n}$ in the sense of Definition  \ref{defequiv}. Thus

\bean \label{integrxy}
\Ver(z /\hbar^{n}) = \dfrac{1}{(2 \pi \sqrt{-1})^{k(n-k)}} \oint\limits_{\gamma^{'}_{k,n}(\hbar)}\, \tilde{\Phi}(x,z) \, \bigwedge\limits_{i,j} \dfrac{dx_{i,j}}{x_{i,j}} 
\eean
Now, in this integral we change the variables of integration by 
$
x_{i,j} = y_{i,j} \hbar^{m_{i,j}}.
$ The contour $\gamma^{'}_{k,n}(\hbar)$ in the variables $y_{i,j}$ is given by
$|y_{i,j}|=m_{i,j} \varepsilon$, 
i.e., in the coordinates $y_{i,j}$ we integrate over the original contour $\gamma_{k,n}$. Overall we obtain:
\bean \label{verchang}
\Ver(z/\hbar^{n}) = \dfrac{1}{(2 \pi \sqrt{-1})^{k(n-k)}} \oint\limits_{\gamma_{k,n}}\, \tilde{\Phi}(y,z) \, \bigwedge\limits_{i,j} \dfrac{dy_{i,j}}{y_{i,j}} 
\eean
where

\bean \nonumber
\tilde{\Phi}(y,z)= 
\Big(\prod\limits_{i=1}^{n-1} \prod\limits_{a<b}\Delta(y_{i,a} \hbar^{m_{i,a}},y_{i,b} \hbar^{m_{i,b}})\Big)\Big(\prod\limits_{i=1}^{n-2} \prod\limits_{a=1}^{\bv_i} \prod\limits_{b=1}^{\bv_{i+1}} \tilde{L}(y_{i,a} \hbar^{m_{i,a}},y_{i+1,b} \hbar^{m_{i+1,b}})\Big) \times \\ \label{phiy}
\times \Big(\prod\limits_{i=1}^{k} \tilde{L}(z_{1},y_{k,i} \hbar^{m_{k,i}})  \tilde{L}(z_{2} \hbar^{n},y_{n-k,i} \hbar^{m_{n-k,i}})\Big). 
\eean
We have:
$$ \tilde{L}(y_{i,a} \hbar^{m_{i,a}},y_{i+1,b} \hbar^{m_{i+1,b}})=\left\{\begin{array}{ll}
 \Big(1-(y_{i,a}/y_{i+1,b})/(\hbar^{m_{i+1,b}-m_{i,a}}) \Big)^{-\hbar/\epsilon}, & m_{i,a} < m_{i+1,b}, 
 \\
\Big(1-(y_{i+1,b}/y_{i,a})/(\hbar^{m_{i,a}-m_{i+1,b}})\Big)^{-\hbar/\epsilon}, & m_{i,a} > m_{i+1,b}.
 \end{array}\right.
$$
Note that the powers of $\hbar$ appearing in these factors are positive integers. 
Thus, we compute
$$
\lim\limits_{\hbar\to \infty}\, L(y_{i,a} \hbar^{m_{i,a}},y_{i+1,b} \hbar^{m_{i+1,b}})= 
\left\{\begin{array}{ll}
e^{\frac{1}{\epsilon} \frac{y_{i,a}}{y_{i+1,b}}}, & m_{i,a}=m_{i+1,b}-1,\\
\\
e^{\frac{1}{\epsilon} \frac{y_{i+1,b}}{y_{i,a}}}, & m_{i,a}=m_{i+1,b}+1, \\
\\
1, & \mathrm{otherwise.}
\end{array}\right.
$$
We also have
$$
\tilde{L}(z_{1},y_{k,i} \hbar^{m_{k,i}})=\Big(1-(z_{1}/y_{k,i})/\hbar^{m_{k,i}}\Big)^{-\hbar/\epsilon},
$$

$$
\tilde{L}(z_{2} \hbar^{n},y_{n-k,i} \hbar^{m_{n-k,i}}) = \Big(1-y_{n-k,i} \hbar^{m_{n-k,i}-n}/z_{2}\Big)^{-\hbar/\epsilon}.
$$
with $m_{k,i}=2 i -1$ and $m_{n-k,i}=n- 2k +2 i-1$, therefore
$$
\lim\limits_{\hbar\to \infty}\, \tilde{L}(z_{1},y_{k,i} \hbar^{m_{k,i}}) = 
\left\{\begin{array}{ll}
e^{\frac{1}{\epsilon} \frac{z_{1}}{y_{k,1}}}, & i=1,\\
1& i\neq 1
\end{array}\right.
$$
and
$$
\lim\limits_{\hbar\to \infty}\, 
\tilde{L}(z_{2} \hbar^{n},y_{n-k,i} \hbar^{m_{n-k,i}}) = 
\left\{\begin{array}{ll}
e^{\frac{1}{\epsilon} \frac{y_{n-k,k}}{z_{2}}}, & i=k,\\
1& i\neq k
\end{array}\right.
$$
Finally
$$
\Delta(y_{i,a} \hbar^{m_{i,a}},y_{i,b} \hbar^{m_{i,b}})=(1-y_{i,a}/y_{i,b}/(\hbar^{m_{i,b}-m_{i,a}}))^{2\hbar/\epsilon}
$$
and since $m_{i,b}-m_{i,a}\geq 2$ for $b>a$ we have
$$
\lim\limits_{\hbar\to \infty}\, \Delta(y_{i,a} \hbar^{m_{i,a}},y_{i,b} \hbar^{m_{i,b}})=1.
$$
In summary, the limit $\hbar\to \infty$ of a factor in (\ref{phiy}) is non-trivial only if it corresponds to an edge of the graph $\Gamma$ and:
\bean \label{pwlim}
\lim\limits_{\hbar\to \infty} \, \tilde{\Phi}(y,z)  = \prod_{e\in \textrm{edges}(\Gamma)} \, e^{\frac{1}{\epsilon}\, \frac{y_{h(e)}}{y_{t(e)}}} = e^{\frac{S({y},z)}{\epsilon}}
\eean
Finally, the point-wise limit (\ref{pwlim}) on the compact set $\gamma_{k,n}$ is uniform, the limit commutes with the integration and  from (\ref{verchang}) we obtain
\bea
\lim\limits_{\hbar \to \infty} \Ver(z/\hbar^{n})
&=&\lim\limits_{\hbar \to \infty} 
\dfrac{1}{(2 \pi \sqrt{-1})^{k(n-k)}} \oint\limits_{\gamma_{k,n}}\, \tilde{\Phi}(y,z) \, \bigwedge\limits_{i,j} \dfrac{dy_{i,j}}{y_{i,j}}
\\
& =&
\dfrac{1}{(2 \pi \sqrt{-1})^{k(n-k)}} \oint\limits_{\gamma_{k,n}}\, e^{\frac{S({y},z)}{\epsilon}} \, \bigwedge\limits_{i,j} \dfrac{dy_{i,j}}{y_{i,j}}\,,
\eea
which completes the proof. 
\end{proof}

\section{ Dwork congruences \label{DworkCong}}

Let $p$ be a prime number. Consider the power series
\bean 
\label{melser}
\mathsf{F}(z)= \sum\limits_{d=0}^{\infty}\, \left[S({x},{z})^d\right]_{0} \in \mathbb{Z}[[z]]\,.
\eean
Note  that the coefficients of this series  differ
from the coefficients  of the $A$-series  in \eqref{expinteg}  by a factor  
$d! \epsilon^d$,   i.e., these two series are related by the Borel integral transform.

\vsk.2>

Let us consider a system of  polynomial truncations of the function $\mathsf{F}(z)$\,:
$$
 \mathsf{F}_{s}(z)= \sum\limits_{d=0}^{p^s-1}\, 
 \left[S({x},{z})^d\right]_{0} \in \mathbb{Z}[z], \qquad s=0,1,2,\dots\,.
$$
where as before we assume that $\left[S({x},{z})^0\right]_{0}=1$. 
\begin{thm}
\label{thm con}
We have the following  congruences:
$$
\dfrac{\mathsf{F}(z)}{\mathsf{F}(z^{p})} \equiv \dfrac{\mathsf{F}_{s}(z)}{\mathsf{F}_{s-1}(z^{p})}  \mod p^s\,,
\qquad s = 1, 2, \dots\,.
$$
In particular, the polynomials $\mathsf{F}_{s}(z)$ satisfy the Dwork type congruences:
$$
\dfrac{\mathsf{F}_{s+1}(z)}{\mathsf{F}_{s}(z^{p})} \equiv
\dfrac{\mathsf{F}_{s}(z)}{\mathsf{F}_{s-1}(z^{p})} \mod p^s, \qquad s=1,2,\dots\,.
$$
\end{thm}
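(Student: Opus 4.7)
The plan is to deduce Theorem~\ref{thm con} from the classical Dwork-type congruences for constant-term generating functions of integer Laurent polynomials, applied to the specialization $Q(x):=S(x,z_{1},z_{2})\big|_{z_{1}=z_{2}=1}\in\mathbb{Z}[x^{\pm}]$ of the polynomial superpotential~\eqref{polsuper}.

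The first step is to establish the structural identification
\begin{equation*}
[S(x,z)^{d}]_{0} \;=\; c_{d}\, z^{d/n}, \qquad c_{d}:=[Q^{d}]_{0}\in\mathbb{Z}_{\ge 0},
\end{equation*}
where $n$ is the length of a minimal $z_{2}\to z_{1}$ flow in the graph $\Gamma$ (the equality being $0$ when $n\nmid d$). That $[S^{d}]_{0}$ is a monomial in $z=z_{1}/z_{2}$ is already noted in Section~\ref{expintdef}; the explicit exponent $d/n$ follows by extending the weight function $m_{i,j}$ via $m(z_{1})=0$ and $m(z_{2})=n$, observing that every edge of $\Gamma$ decreases $m$ by a positive integer, and imposing flow balance at each interior vertex. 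The coefficient $c_{d}$ is then obtained by evaluating this monomial at $z=1$, which gives $[Q^{d}]_{0}$.

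Next I would invoke the theorem of Mellit--Vlasenko (see also Samol--van~Straten, Delaygue) on Dwork congruences for constant terms of powers of integer Laurent polynomials. Applied to $Q$, it yields
\begin{equation*}
\frac{g(t)}{g(t^{p})} \;\equiv\; \frac{g_{s}(t)}{g_{s-1}(t^{p})} \pmod{p^{s}}
\end{equation*}
for $g(t):=\sum_{d\ge 0}c_{d}\,t^{d}$ and $g_{s}(t):=\sum_{d<p^{s}}c_{d}\,t^{d}$. Because $c_{d}=0$ unless $n\mid d$, both $g$ and $g_{s}$ are power series in $t^{n}$, and one checks directly that $g(t)=\mathsf{F}(t^{n})$ and $g_{s}(t)=\mathsf{F}_{s}(t^{n})$. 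Substituting $u=t^{n}$ turns the displayed congruence into the first assertion of the theorem. The consecutive-$s$ congruence follows immediately: both $\mathsf{F}_{s+1}(z)/\mathsf{F}_{s}(z^{p})$ and $\mathsf{F}_{s}(z)/\mathsf{F}_{s-1}(z^{p})$ are congruent to $\mathsf{F}(z)/\mathsf{F}(z^{p})$ modulo $p^{s}$, and hence to each other.

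The main technical burden lies in the combinatorial verification of the first step for a general Grassmannian $T^{*}\Gr(k,n)$: the examples worked out in Section~\ref{explimsec} cover $k=1$ and $\Gr(2,4)$, but the general case requires a careful flow analysis on $\Gamma$ to confirm that every nonzero $[S^{d}]_{0}$ really is a single monomial $c_{d}\,z^{d/n}$ with one common value of $n$. An alternative route, closer in spirit to the Dwork congruences proved for $\Ver(z)$ in \cite{SmV23}, would work directly at the level of $S(x,z)$ via an identity of the form $S(x,z)^{p^{s}}\equiv S(x^{p},z^{p})^{p^{s-1}}\cdot(\cdots)\pmod{p^{s}}$ and then extract $[\,\cdot\,]_{0}$ afterwards; this would bypass the preliminary reduction but demands more intricate $p$-adic bookkeeping of multinomial expansions.
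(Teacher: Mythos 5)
Your route is the same as the paper's: specialize $z_{1}=z_{2}=1$, apply the Mellit--Vlasenko theorem to the resulting integer Laurent polynomial, and transport the congruences back through the substitution $\xi\mapsto z^{1/n}$ using the fact that $[S(x,z)^{d}]_{0}$ is a monomial $c_{d}\,z^{d/n}$. Your flow-balance argument for that monomial structure, and your derivation of the second congruence from the first, both match what the paper does (the paper simply declares the monomial structure ``clear from the structure of the superpotential'').

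There is, however, one genuine gap: you invoke Mellit--Vlasenko as if it applied to an arbitrary integer Laurent polynomial, but Theorem~1.1 of \cite{MeVl16} requires that the origin be the \emph{only interior integral point} of the Newton polytope of $Q$. This hypothesis is not a formality --- without it the congruences can fail --- and verifying it is precisely the nontrivial geometric input in the paper's proof. The paper supplies it by identifying the Newton polytope $\Delta(k,n)=\mathsf{N}(S(x,1))$ explicitly as the convex hull of the vectors $f_{1,1}$, $-f_{k,n-k}$, $f_{i,j+1}-f_{i-1,j+1}$, $f_{i,j+1}-f_{i,j}$, and citing its reflexivity (Theorem~3.1.3 of \cite{BCKS98}), from which the unique-interior-point property follows. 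You have instead located the ``main technical burden'' in the monomial structure of $[S^{d}]_{0}$, which is comparatively routine; the Newton polytope condition is the step your write-up must add before the citation of \cite{MeVl16} is legitimate. The alternative route you sketch at the end (a direct congruence $S(x,z)^{p^{s}}\equiv\cdots$ at the level of the Laurent polynomial) is not what the paper does and is not developed enough to assess.
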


\begin{proof}
The proof of these  congruences is based on the properties of the Newton polytope
\bea
\Delta(k,n)=\textsf{N}(S({x},{z})) \subset \mathbb{R}^{k(n-k)}
\eea
 of the Laurent polynomial~(\ref{polsuper}) in the
variables $x=(x_{i,j})$. Let $f_{i,j}$ 
with $i=1,\dots,k, \ \ j=1,\dots, n-k$ denote the standard basis in $\mathbb{R}^{k(n-k)}$. 
The vectors
 $f_{i,j}$ correspond to the boxes of the $k\times (n-k)$-diagram 
 in Fig. \ref{boxgraph}. From (\ref{polsuper}), we see that $\Delta(k,n)$ is the convex hull of the vectors:
$$
f_{1,1}, \ \ f_{i,j+1}-f_{i-1,j+1}, \ \ i=2,\dots, k, \ \ j=0,\dots, n-k-1,
$$
$$
-f_{k,n-k}, \ \  f_{i,j+1}-f_{i,j}, \ i=1,\dots, k,\ \  j=1,\dots, n-k-1.
$$
This polytope has been considered in many publications, in particular it is known 
to be {\it reflexive} see Theorem 3.1.3 in \cite{BCKS98}. We recall
that the origin $(0,\dots,0)$ is the only integral point in a reflexive polytope, 
see for instance \cite{Nill} for an overview.

\medskip
Now the proof of Theorem \ref{thm con} follows from Theorem 1.1 in \cite{MeVl16}, after simple 
modifications. Let $S({x},1)$ denote the superpotential (\ref{polsuper}) 
with $z_{1}=z_{2}=1$. Clearly, this Laurent polynomial has the  same Newton polytope
 $\mathsf{N}(S({z},1))=\mathsf{N}(S({x},{z}))=\Delta(k,n)$.

 Consider
 $$
\mathsf{M}(\xi)=\sum\limits_{d=0}^{\infty}\, [S({x},1)^{d}]_{0}\, \xi^{d}, 
\qquad
\mathsf{M}_{s}(\xi)=\sum\limits_{d=0}^{p^s-1}\, [S({x},1)^{d}]_{0}\, \xi^{d}.
$$
Since $(0,\dots,0)$ is the only interior point of \ $\mathsf{N}(S({z},1))$ we may apply 
Theorem 1.1 in \cite{MeVl16} and conclude that these functions satisfy the congruences:
\bean 
\label{intiden}
\dfrac{\mathsf{M}(\xi)}{\mathsf{M}(\xi^{p})} \equiv \dfrac{\mathsf{M}_{s}(\xi)}{\mathsf{M}_{s-1}(\xi^{p})} \,,
\qquad
 \dfrac{\mathsf{M}_{s+1}(\xi)}{\mathsf{M}_{s}(\xi^{p})} \equiv
\dfrac{\mathsf{M}_{s}(\xi)}{\mathsf{M}_{s-1}(\xi^{p})} \mod p^s\,, \qquad s = 1, 2,\dots\,.
\eean
From the structure of the superpotential $S({x},z)$ it is clear that
$$
[S({x},z)^d]_{0}= [S({x},1)^d]_{0} z^{\frac{d}{n}}\,.
$$
In particular, this coefficient equals zero unless $n$ divides $d$. From this we find that
$$
\mathsf{F}(z)= \sum\limits_{d=0}^{\infty}\, \left[S({x},1)^d\right]_{0} z^{\frac{d}{n}} = \mathsf{M}(z^{\frac{1}{n}})
$$
and similarly $\mathsf{F}_s(z)=\mathsf{M}_s(z^{\frac{1}{n}})$. Now theorem \ref{thm con}
follows from (\ref{intiden}) after the substitution $\xi\to z^{\frac{1}{n}}$. 
\end{proof}

For further discussion of Dwork type congruences for vertex functions and solutions of 
KZ equations 
we refer to \cite{SmV23,Var22b,SV19,VZ21}.

\begin{rem}
Theorem \ref{thm con} implies an infinite factorization:
$$
\mathsf{F}(z)= \prod\limits_{i=0}^{\infty}\, \dfrac{\mathsf{F}_{s}(z^{p^i})}{\mathsf{F}_{s-1}(z^{p^{i+1}})} \mod p^s,
$$
cf. Theorem 5.3 in \cite{SmV23}.
\end{rem}

\bigskip

\end{document}